\newtheorem{theorem}{Theorem}
\newtheorem{defn}{Definition}
\newtheorem{cor}{Corollary}
\begin{document}

\sloppy

%% Paper Title
%% You can use linebreaks \\ within to get better formatting as
%% desired. 
\title{Generalized Bregman Divergence and Gradient of Mutual Information for Vector Poisson Channels }

%% Author names and affiliations:
%%
%% Avoiding spaces at the end of the author lines is not a problem with
%% conference papers because we don't use \thanks or \IEEEmembership.
%%
%% For several authors with only one affiliation:
%%
% \author{
%   \IEEEauthorblockN{Hui-Ting Chang and Stefan M.~Moser}
%   \IEEEauthorblockA{Department of Electrical and Computer Engineering\\
%     National Chiao Tung University (NCTU)\\
%     Hsinchu, Taiwan\\
%     Email: \{email-of-hui-ting,email-of-stefan\}@ieee.org} 
% }
%%
%% For up to three affiliations:
%%
\author{
  \IEEEauthorblockN{Liming Wang$^\dagger$, Miguel Rodrigues$^\star$, Lawrence Carin$^\dagger$}
  \IEEEauthorblockA{$^\dagger$Dept. of Electrical \& Computer Engineering, Duke University, Durham, NC 27708, USA\\
  Email: \{{liming.w, lcarin\}@duke.edu}\\
    $^\star$Dept. of Electronic \& Electrical Engineering, University College London, London, U.K. \\
    Email: m.rodrigues@ucl.ac.uk} 
}
%%
%% For over three affiliations, or if they all won't fit within the width
%% of the page, use this alternative format:
%%
% \author{
%   \IEEEauthorblockN{
%     Michael Shell\IEEEauthorrefmark{1},
%     Homer Simpson\IEEEauthorrefmark{2},
%     James Kirk\IEEEauthorrefmark{3}, 
%     Montgomery Scott\IEEEauthorrefmark{3} and
%     Eldon Tyrell\IEEEauthorrefmark{4}}
%   \IEEEauthorblockA{
%     \IEEEauthorrefmark{1}School of Electrical and Computer Engineering\\
%     Georgia Institute of Technology, Atlanta, Georgia 30332--0250\\ 
%     Email: see http://www.michaelshell.org/contact.html}
%   \IEEEauthorblockA{
%     \IEEEauthorrefmark{2}Twentieth Century Fox, Springfield, USA\\
%     Email: homer@thesimpsons.com}
%   \IEEEauthorblockA{
%     \IEEEauthorrefmark{3}Starfleet Academy, San Francisco, California 96678-2391\\
%     Telephone: (800) 555--1212, Fax: (888) 555--1212}
%   \IEEEauthorblockA{
%     \IEEEauthorrefmark{4}Tyrell Inc., 123 Replicant Street, Los Angeles, California 90210--4321}
% }

%% Use for special paper notices
%\IEEEspecialpapernotice{(Invited Paper)}

%% To balance the two columns, you should reduce the text-height of
%% the last page using the following command:
%%%%%%%%%%%%%%%%%%%%%%%%%%%%%%%%%%%%%%%%%%%%%%%%%%%%%%%%%%%%%%%%%%%%%
%\addtolength{\textheight}{-9.35cm}
%%%%%%%%%%%%%%%%%%%%%%%%%%%%%%%%%%%%%%%%%%%%%%%%%%%%%%%%%%%%%%%%%%%%%
%% with an appropriate value. This command must be place on the second
%% last page, i.e., for a one-page abstract here, for a two-page
%% abstract right after the \maketitle command.

%% Create the title:
\maketitle

%% Abstract: 
%% For the final version of the accepted paper, please make sure you
%% remove the comment "THIS PAPER IS ELIGIBLE FOR THE STUDENT PAPER
%% AWARD."
%%
\begin{abstract}
We investigate connections between information-theoretic and estimation-theoretic quantities in vector Poisson channel models. In particular, we generalize the gradient of mutual information with respect to key system parameters from the scalar to the vector Poisson channel model. We also propose, as another contribution, a generalization of the classical Bregman divergence that offers a means to encapsulate under a unifying framework the gradient of mutual information results for scalar and vector Poisson and Gaussian channel models. The so-called generalized Bregman divergence is also shown to exhibit various properties akin to the properties of the classical version. The vector Poisson channel model is drawing considerable attention in view of its application in various domains: as an example, the availability of the gradient of mutual information can be used in conjunction with gradient descent methods to effect compressive-sensing projection designs in emerging X-ray and document classification applications.

%between mutual information and the conditional mean estimator for vector Poisson channels. Inspired by scalar results, the gradient of mutual information for vector Poisson channels is developed here. As another contribution, we propose the generalized Bregman divergence and present its properties. The generalized Bregman divergence is able to induce various matrix-valued metrics and has numerous applications that generalize its scalar version. It is found that the gradients of mutual information for both vector Gaussian and Poisson channels can be incorporated into a unified framework, in terms of the expected generalized Bregman divergence between the input and the conditional mean estimator for general input statistics. Further, we provide a brief discussion on applications of the results for vector Poisson channels.
\end{abstract}

\section{Introduction}

There has been a recent emergence of intimate connections between various quantities in information theory and estimation theory. The perhaps most prominent connections reveal the interplay between two notions with operational relevance in each of the domains: {\em mutual information} and {\em conditional mean estimation}.

In particular, Guo, Shamai and Verd\'u \cite{guo2005mutual} have expressed the derivative of mutual information in a scalar Gaussian channel via the (non-linear) {\em minimum mean-squared error} (MMSE), and Palomar and Verd\'u \cite{palomar2006gradient} have expressed the gradient of mutual information in a vector Gaussian channel in terms of the MMSE matrix. The connections have also been extended from the scalar Gaussian to the scalar Poisson channel model, which has been ubiquitously used to model optical communications \cite{guo2008mutual,verdu1999poisson}. Recently, parallel results for scalar binomial and negative binomial channels have been established \cite{taborda2012mutual,guo2012information}. Inspired by the Lipster-Shiryaev formula \cite{liptser2000statistics}, it has been demonstrated that it is often easier to investigate the gradient of mutual information rather than mutual information itself \cite{guo2008mutual}. Further, it has also been shown that the derivative of mutual information with respect to key system parameters also relates to the conditional mean estimator \cite{guo2008mutual}.

This paper also pursues this overarching theme. One of the goals is to generalize the gradient of mutual information from scalar to vector Poisson channel models. This generalization is relevant not only from the theoretical but also from the practical perspective, in view of the numerous emerging applications of the vector Poisson channel model in X-ray systems \cite{elbakri2002statistical} and document classification systems (based on word counts) \cite{zhou2011beta}. The availability of the gradient then provides the means to optimize the mutual information with respect to specific system parameters via gradient descent methods.

The other goal is to encapsulate under a unified framework the gradient of mutual information results for scalar Gaussian channels, scalar Poisson channels and their vector counterparts.

This encapsulation, which is inspired by recent results that express the derivative of mutual information in scalar Poisson channels as the average value of the Bregman divergence associated with a particular loss function between the input and the conditional mean estimate of the input \cite{atar2012mutual}, is possible by constructing a generalization of the classical Bregman divergence from the scalar to the vector case. This generalization of Bregman divergence appears to be new to the best of our knowledge. The gradients of mutual information of the vector Poisson model and the vector Gaussian model, as well as the scalar counterparts, are then also expressed - and akin to \cite{atar2012mutual} - in terms of the average value of the so called generalized Bregman divergence associated with particular (vector) loss function between the input vector and the conditional mean estimate of the input vector.

We also study in detail various properties of the generalized Bregman divergence: the properties of the proposed divergence are shown to mimic closely those of the classical Bregman divergence.

The generalized Bregman divergence framework is of interest not only from the theoretical but also the practical standpoint: for example, it has been shown that re-expressing results via a Bregman divergence can often lead to enhancements to the speed of various optimization algorithms \cite{duchi2010adaptive}.

This paper is organized as follows: Section \ref{sec:poisson} introduces the channel model. Section \ref{sec:gradient} derives the gradient of mutual information with respect to key system parameters for vector Poisson channel models. Section \ref{sec:bregman} introduces the notion of a generalized Bregman divergence and its properties. Section \ref{sec:GBD} re-derives the gradient of mutual information of vector Poisson and Gaussian channel models under the light of the proposed Bregman divergence. A possible application of the theoretical results in an emerging domain is succintly described in Section \ref{sec:application}. Section \ref{sec:conclusion} concludes the paper.

\section{The Vector Poisson Channel}
\label{sec:poisson}

We define the vector Poisson channel model via the random transformation:
\begin{equation}
P \left(Y|X\right) = \prod_{i=1}^{m} P \left(Y_i|X\right) = \prod_{i=1}^{m} \mbox{Pois}\left((\Phi X)_i+\lambda_i\right)  \label{vector_model}
\end{equation}
where the random vector $X = (X_1,X_2,\ldots,X_n)\in \mathbb{R}_+^n$ represents the channel input, the random vector $Y = (Y_1,Y_2,\ldots,Y_m) \in\mathbb{Z}^m_+$ represents the channel output, the matrix $\Phi \in \mathbb{R}^{m\times n}_+$ represents a linear transformation whose role is to entangle the different inputs, and the vector $\lambda = (\lambda_1,\lambda_2,\ldots,\lambda_m) \in \mathbb{R}_+^m$ represents the dark current. $\mbox{Pois} \left(z\right)$ denotes a standard Poisson distribution with parameter $z$.

This vector Poisson channel model associated with arbitrary $m$ and $n$ is a generalization of the standard scalar Poisson model associated with $m=n=1$  given by \cite{guo2008mutual,atar2012mutual}:
\begin{equation}
P(Y|X) = \mbox{Pois}(\phi X + \lambda) \label{scalar_model}
\end{equation}
where the scalar random variables $X \in \mathbb{R}_+$ and $Y  \in \mathbb{Z}_+$ are associated with the input and output of the scalar channel, respectively, $\phi \in \mathbb{R}_+$ is a scaling factor, and $\lambda \in \mathbb{R}_+$ is associated with the dark current.\footnote{We use -- except for the scaling matrix and the scaling factor -- identical notation for the scalar Poisson channel and the vector Poisson channel. The context defines whether we are dealing with scalar or vector quantities.}

The generalization of the scalar Poisson model in \eqref{scalar_model} to the vector one in \eqref{vector_model} offers the means to address relevant problems in various emerging applications, most notably in X-ray and document classification applications as discussed in the sequel \cite{zhou2011beta,ben2001ordered}.

%We now introduce the vector Poisson channel model under consideration. We consider the input of the channel to be the random vector $X = (X_1,X_2,\ldots,X_n)\in \mathbb{R}_+^n$ and the output of the channel to be the random vector $Y = (Y_1,Y_2,\ldots,Y_m) \in\mathbb{Z}^m_+$. We also consider a scaling matrix $\Phi \in \mathbb{R}^{m\times n}_+$, whose role is to entangle the different inputs, and a dark current vector $\lambda = (\lambda_1,\lambda_2,\ldots,\lambda_m) \in \mathbb{R}_+^m$.The vector Poisson channel model is such that:
%\begin{equation}
%P \left(Y|X\right) = \prod_{i=1}^{m} P \left(Y_i|X\right) = \prod_{i=1}^{m} \mbox{Pois}\left((\Phi X)_i+\lambda_i\right) 
%\end{equation}
%where $\left(\Phi X\right)_i$ represents the i-th entry of the vector $\Phi X$, i.e. the channel output random variables $Y_i$,  $i=1,\dots, m$, conditioned on the channel input random vector $X$ are independent with Poisson distribution $\mbox{Pois}\left((\Phi X)_i+\lambda_i\right), i=1,\ldots,m$. This is a model that arises in various emerging applications, such as X-ray and document classification applications [{\bf REFERENCES}].

The goal is to define the gradient of mutual information between the input and the output of the vector Poisson channel with respect to the scaling matrix, i.e.
\begin{equation}
\nabla_{\Phi} I(X;Y) = \left[ \nabla_{\Phi} I(X;Y)_{ij} \right]
\end{equation}
where $\nabla_{\Phi} I(X;Y)_{ij}$ represents the $(i,j)$-th entry of the matrix $\nabla_{\Phi} I(X;Y)$, and with respect to the dark current, i.e.
\begin{equation}
\nabla_{\lambda} I(X;Y) = \left[ \nabla_{\lambda} I(X;Y)_{i} \right]
\end{equation}
where $\nabla_{\lambda} I(X;Y)_{i}$ represents the i-th entry of the vector $\nabla_{\lambda} I(X;Y)$.

We will also be concerned with drawing connections between the gradient result for the vector Poisson channel and the gradient result for the Gaussian counterpart in the sequel. In particular, we will consider the vector Gaussian channel model given by:
\begin{equation}
Y = \Phi X + N \label{gaussian_model}
\end{equation}
where $X \in \mathbb{R}^n$ represents the vector-valued channel input, $Y \in \mathbb{R}^m$ represents the vector-valued channel output, $\Phi \in \mathbb{R}^{m\times n}$ represents the channel matrix, and $N \sim \mathcal{N} \left(0,I\right) \in \mathbb{R}^m$ represents white Gaussian noise.

It has been established that the gradient of mutual information between the input and the output of the vector Gaussian channel model in \eqref{gaussian_model} with respect to the channel matrix obeys the simple relationship \cite{palomar2006gradient}:
\begin{equation}
\nabla_\Phi I(X;Y)=\Phi E, \label{gaussian_gradient}
\end{equation}
where 
\begin{equation}
E=\mathbb{E}\left[(X-\mathbb{E}(X|Y))(X-\mathbb{E}(X|Y))^T \right] \label{mmse_matrix}
\end{equation}
denotes the MMSE matrix.

%Consider vector-valued random variables $X = (X_1,X_2,\ldots,X_n)\in \mathbb{R}_+^n$ and $Y = (Y_1,Y_2,\ldots,Y_m) \in\mathbb{Z}^m_+$, and scaling matrix $\Phi \in \mathbb{R}^{m\times n}_+$, the output $Y\in\mathbb{Z}^m_+$ is a non-negative integer.

%Given $X$, the $i$th component of $Y$, denoted $Y_i$,  $i=1,\dots, m$, are independent with Poisson distribution $\mbox{Pois}((\Phi X)_i)$, where $(\cdot)_i$ denotes the $i$-th entry of the vector. More generally, we incorporate the dark current term $\boldsymbol \lambda \in \mathbb{R}_+^m$ into the Poisson model. The scaling matrix $\Phi$ and the dark current $\boldsymbol \lambda$ correspond to the channel and the variance of noise in the vector Gaussian channel. We have
%\begin{equation}
%P(Y_i|X)=\mbox{Pois}((\Phi X)_i+\boldsymbol \lambda_i), \label{vector_model}
%\end{equation}
%$i=1,\dots ,m$.
 %The Poisson channel model is illustrated in Fig. \ref{fig:res}.
%\begin{figure}[htb]
%\centering
%\includegraphics[width=9cm]{channel.png}
%\caption{Illustration of the vector Poisson channel with dark current.} 
%\label{fig:res}
%\end{figure}

%We refer to the channel from source $X$ to sink $Y$ as the vector Poisson channel with dark current $\boldsymbol \lambda$

%If $m=n=1$, this is termed the scalar Poisson channel.  Note that this vector Poisson model specializes to the scalar Poisson model in \cite{guo2008mutual}  for $m=n=1$.

\section{Gradient of Mutual Information for \\ Vector Poisson Channels}
\label{sec:gradient}

We now introduce the gradient of mutual information with respect to the scaling matrix and with respect to the dark current for vector Poisson channel models.  In particular, we assume that the regularity conditions necessary to interchange freely the order of integration and differentiation hold in the sequel, i.e., order of the differential operators $\frac{\partial }{\partial \Phi_{ij}}$, $\frac{\partial}{\partial \lambda_{i}}$ and the expectation operator $\mathbb{E}(\cdot)$. \footnote{We consider for convenience natural logarithms throughout the paper.}

%For convenience, all logarithms have base $e$ throughout the paper.

%The following Theorem characterizes the gradient of the mutual information with respect to both the scaling matrix and the dark current.

\begin{theorem}
\label{thm:poisson}
Consider the vector Poisson channel model in \eqref{vector_model}. Then, the gradient of mutual information between the input and output of the channel with respect to the scaling matrix is given by:
\begin{align}
\left[ \nabla_{\Phi} I(X;Y)_{ij} \right] = & \big[ \mathbb{E}\left[ X_j \log ((\Phi X)_i+\lambda_i) \right] \nonumber \\
&-\mathbb{E} \left[ \mathbb{E}[X_j|Y] \log \mathbb{E}[(\Phi X)_i+\lambda_i|Y] \right] \big],
\end{align}
and with respect to the dark current is given by:
\begin{align}
\left[ \nabla_{\lambda} I(X;Y)_{i} \right] = & \big[ \mathbb{E}[\log ((\Phi X)_i+\lambda_i)] \nonumber \\
&-\mathbb{E}[\log \mathbb{E}[(\Phi X)_i+\lambda_i|Y]] \big].
\end{align}
irrespective of the input distribution provided that the regularity conditions hold.
\end{theorem}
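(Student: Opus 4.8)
The plan is to compute the two partial derivatives directly from the definition, writing
\[
I(X;Y)=\mathbb{E}_{X,Y}[\log P(Y|X)]-\mathbb{E}_{Y}[\log P(Y)],\qquad P(Y)=\mathbb{E}_{X}[P(Y|X)],
\]
the second expectation being over the discrete marginal on $\mathbb{Z}_+^m$, and differentiating term by term under the sum/integral as the assumed regularity permits. (Reducing to the known scalar Poisson formula is tempting, since perturbing $\Phi_{ij}$ changes only the conditional law of the single output $Y_i$, but it is awkward because $\mathbb{E}[X_j|Y]$ depends on the whole vector $Y$; the direct route is cleaner and self-contained.) From \eqref{vector_model}, $\log P(Y|X)=\sum_{l=1}^{m}\big[Y_l\log U_l-U_l-\log(Y_l!)\big]$ with $U_l:=(\Phi X)_l+\lambda_l$, and since only $U_i$ depends on $\Phi_{ij}$ with $\partial U_i/\partial\Phi_{ij}=X_j$, one gets $\partial\log P(Y|X)/\partial\Phi_{ij}=X_j\big(Y_i/U_i-1\big)$ and hence $\partial P(Y|X)/\partial\Phi_{ij}=P(Y|X)\,X_j\big(Y_i/U_i-1\big)$. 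The contributions from differentiating the normalizations $\sum_Y P(Y|X)\equiv 1$ and $\sum_Y P(Y)\equiv 1$ vanish, leaving
\begin{align*}
\frac{\partial I(X;Y)}{\partial\Phi_{ij}}= & \mathbb{E}_{X,Y}\!\left[X_j\Big(\tfrac{Y_i}{U_i}-1\Big)\log P(Y|X)\right]\\
& -\mathbb{E}_{X,Y}\!\left[X_j\Big(\tfrac{Y_i}{U_i}-1\Big)\log P(Y)\right].
\end{align*}

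Both expectations are dispatched by a single tool, the Poisson ``raising'' identity: conditionally on $X$ the coordinates $Y_l$ are independent Poisson and $Y_i\,P(Y|X)=U_i\,P(Y-e_i|X)$, so $\mathbb{E}[(Y_i/U_i)f(Y)\mid X]=\mathbb{E}[f(Y+e_i)\mid X]$ for any $f$ (here $e_i$ is the $i$th standard basis vector, and the $Y_i=0$ term on the left is $0$, consistent with the re-indexing). Applying this with $f(Y)=\log P(Y|X)$ and using $\log P(Y+e_i|X)=\log P(Y|X)+\log U_i-\log(Y_i+1)$ collapses the first expectation to $\mathbb{E}[X_j\log U_i]-\mathbb{E}_{X,Y}[X_j\log(Y_i+1)]$. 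Applying it with $f(Y)=\log P(Y)$ (after conditioning on $X$) turns the second expectation into $\mathbb{E}_{X,Y}\big[X_j\log\big(P(Y+e_i)/P(Y)\big)\big]$, and the Bayes/Robbins identity $P(Y+e_i)/P(Y)=\mathbb{E}[U_i\mid Y]/(Y_i+1)$ — immediate from $P(Y+e_i)=\mathbb{E}_X[P(Y|X)\,U_i/(Y_i+1)]$ and the definition of conditional expectation — rewrites it as $\mathbb{E}_{X,Y}[X_j\log\mathbb{E}[U_i|Y]]-\mathbb{E}_{X,Y}[X_j\log(Y_i+1)]$.

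Subtracting, the two $\mathbb{E}_{X,Y}[X_j\log(Y_i+1)]$ terms cancel, and rewriting $\mathbb{E}_{X,Y}[X_j\,g(Y)]=\mathbb{E}_Y[\mathbb{E}[X_j|Y]\,g(Y)]$ for the surviving marginal term yields
\[
\frac{\partial I(X;Y)}{\partial\Phi_{ij}}=\mathbb{E}[X_j\log U_i]-\mathbb{E}\big[\mathbb{E}[X_j|Y]\,\log\mathbb{E}[U_i|Y]\big],
\]
which is the asserted expression after substituting $U_i=(\Phi X)_i+\lambda_i$. The dark-current formula follows from the identical argument with $\partial U_i/\partial\lambda_i=1$ in place of $X_j$, i.e. by formally replacing $X_j$ with $1$ throughout, since then $\partial\log P(Y|X)/\partial\lambda_i=Y_i/U_i-1$.

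The computations are routine once the raising and Robbins identities are in hand; I expect the only real friction to be the bookkeeping in the marginal ($\log P(Y)$) term — carrying out the shift $Y\mapsto Y+e_i$ inside the sum over $\mathbb{Z}_+^m$, checking the $Y_i=0$ boundary contribution, and verifying finiteness of the expectations involved (e.g. $\mathbb{E}[X_j\log(Y_i+1)]$, controlled via $\log(Y_i+1)\le Y_i$ and $\mathbb{E}[X_jY_i]=\mathbb{E}[X_jU_i]$). This, together with the interchange of $\partial/\partial\Phi_{ij}$ with $\mathbb{E}(\cdot)$ and $\sum_Y$, is exactly what the stated regularity conditions are meant to cover, so I would invoke rather than reprove it.
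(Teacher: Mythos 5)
Your proof is correct, and it follows essentially the route the paper relies on: the paper omits an explicit proof, presenting Theorem \ref{thm:poisson} as the multi-dimensional generalization of Theorems 1 and 2 of Guo, Shamai and Verd\'u \cite{guo2008mutual}, whose argument is exactly your direct differentiation of $\mathbb{E}[\log P(Y|X)]-\mathbb{E}[\log P(Y)]$ combined with the Poisson raising identity $Y_iP(Y|X)=U_iP(Y-e_i|X)$ and the Turing--Good--Robbins relation $P(Y+e_i)/P(Y)=\mathbb{E}[U_i|Y]/(Y_i+1)$. The cancellation of the $\mathbb{E}[X_j\log(Y_i+1)]$ terms and the tower-property step are handled correctly, and the reduction of the dark-current case to the same computation with $X_j$ replaced by $1$ is valid.
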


It is clear that Theorem \ref{thm:poisson} represents a multi-dimensional generalization of Theorems 1 and 2 in \cite{guo2008mutual}. The scalar result follows immediately from the vector counterpart by taking $m=n=1$.

\begin{cor}
Consider the scalar Poisson channel model in \eqref{scalar_model}. Then, the derivative of mutual information between the input and output of the channel with respect to the scaling factor is given by:
\begin {align}
\frac{\partial}{\partial \phi}I(X;Y)= & \mathbb{E}\left[ X \log ((\phi X)+{\lambda}) \right] \nonumber \\
&-\mathbb{E} \left[ \mathbb{E}[X|Y] \log \mathbb{E}[\phi X+{\lambda}|Y] \right],
\end{align}
and with respect to the dark current is given by:
\begin{align}
\frac{\partial}{\partial \lambda}I(X;Y)= &\mathbb{E}[\log (\phi X+{\lambda})] \nonumber \\
&-\mathbb{E}[\log \mathbb{E}[\phi X+{\lambda}|Y]].
\end{align}
irrespective of the input distribution provided that the regularity conditions hold.
\end{cor}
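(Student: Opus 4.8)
The plan is to differentiate the identity $I(X;Y)=\mathbb{E}[\log P(Y|X)]-\mathbb{E}[\log P(Y)]$, where $P(y)=\mathbb{E}_X[P(y|X)]$ is the output law, with respect to each scalar parameter $\theta\in\{\Phi_{ij},\lambda_i\}$, exploiting the explicit product-Poisson form of $P(Y|X)$ in \eqref{vector_model}. Interchanging differentiation with the expectations (licensed by the standing regularity assumptions) and using the product rule, the $+1$ pieces produced by $\partial_\theta[P\log P]=(\partial_\theta P)(\log P+1)$ integrate to $\partial_\theta 1=0$ in both terms, and after writing $\partial_\theta P(y)=\mathbb{E}_X[P(y|X)\,\partial_\theta\log P(y|X)]$ one is left with the compact expression
\[
\frac{\partial}{\partial\theta}I(X;Y)=\mathbb{E}\left[\frac{\partial\log P(Y|X)}{\partial\theta}\,\log\frac{P(Y|X)}{P(Y)}\right].
\]
From $\log P(Y|X)=\sum_k\big(Y_k\log U_k-U_k-\log Y_k!\big)$, with the shorthand $U_k:=(\Phi X)_k+\lambda_k$, I read off $\partial_{\Phi_{ij}}\log P(Y|X)=X_j(Y_i/U_i-1)$ and $\partial_{\lambda_i}\log P(Y|X)=Y_i/U_i-1$.

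The crux is the factor $Y_i/U_i$. To deal with it I would use the Poisson displacement identity $\mathbb{E}[Y_i\,g(X,Y)]=\mathbb{E}[U_i\,g(X,Y+e_i)]$, where $e_i$ is the $i$-th standard basis vector; this follows by conditioning on $X$ and using $y_i\,\mathrm{Pois}(U_i;y_i)=U_i\,\mathrm{Pois}(U_i;y_i-1)$ in the $i$-th Poisson factor. Applying it with $g(X,Y)=X_j\,U_i^{-1}\log\frac{P(Y|X)}{P(Y)}$ (and with $X_j$ dropped for the $\lambda_i$ computation) converts $Y_i/U_i$ into a pure coordinate shift $Y\mapsto Y+e_i$ and cancels $U_i^{-1}$. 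It then remains to compare $\log\frac{P(Y+e_i|X)}{P(Y+e_i)}$ with $\log\frac{P(Y|X)}{P(Y)}$: the Poisson pmf ratio gives $P(Y+e_i|X)/P(Y|X)=U_i/(Y_i+1)$ immediately, while $P(y+e_i)=\tfrac{1}{y_i+1}\mathbb{E}_X[P(y|X)\,((\Phi X)_i+\lambda_i)]=\tfrac{1}{y_i+1}P(y)\,\mathbb{E}[(\Phi X)_i+\lambda_i\mid Y=y]$ gives $P(Y+e_i)/P(Y)=\widehat U_i(Y)/(Y_i+1)$, where $\widehat U_i(Y):=\mathbb{E}[(\Phi X)_i+\lambda_i\mid Y]$. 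The spurious $(Y_i+1)$ factors cancel, leaving
\[
\log\frac{P(Y+e_i|X)}{P(Y+e_i)}=\log\frac{P(Y|X)}{P(Y)}+\log U_i-\log\widehat U_i(Y).
\]

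Substituting this back into $\partial_\theta I(X;Y)$, the $\log\frac{P(Y|X)}{P(Y)}$ contributions coming from the $Y_i/U_i$ term and from the $-1$ term cancel exactly, leaving $\partial_{\Phi_{ij}}I(X;Y)=\mathbb{E}[X_j\log U_i]-\mathbb{E}[X_j\log\widehat U_i(Y)]$ and $\partial_{\lambda_i}I(X;Y)=\mathbb{E}[\log U_i]-\mathbb{E}[\log\widehat U_i(Y)]$. A final use of the tower property, since $\log\widehat U_i(Y)$ is $\sigma(Y)$-measurable, gives $\mathbb{E}[X_j\log\widehat U_i(Y)]=\mathbb{E}[\mathbb{E}[X_j\mid Y]\log\mathbb{E}[(\Phi X)_i+\lambda_i\mid Y]]$, which is precisely the claimed identity; specializing to $m=n=1$ then yields the Corollary. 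I expect the main obstacle to be exactly the displacement-identity/pmf-ratio step: one has to carry the index shift $Y\mapsto Y+e_i$ through carefully so that the $\log Y_i!$ and $(Y_i+1)$ factors telescope, and check that the integrability needed both for the interchange of differentiation and expectation and for the shift identity is subsumed in the regularity hypothesis. The remaining steps --- differentiation under the expectation and the conditioning/tower arguments --- are routine bookkeeping once that is in place.
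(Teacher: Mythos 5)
Your proposal is correct, and it is considerably more self-contained than what the paper does: the paper obtains this corollary in one line, as the specialization $m=n=1$ of Theorem~\ref{thm:poisson}, whose own proof is omitted from this conference version. What you have actually written is a full derivation of the vector statement (Theorem~\ref{thm:poisson}) followed by the specialization. Your route --- decomposing $I(X;Y)=\mathbb{E}[\log P(Y|X)]-\mathbb{E}[\log P(Y)]$, differentiating under the expectation so that the $+1$ terms vanish by normalization, and then resolving the score factor $Y_i/U_i-1$ via the Poisson displacement identity $\mathbb{E}[Y_i\,g(X,Y)]=\mathbb{E}[U_i\,g(X,Y+e_i)]$ --- is the natural vector generalization of the argument used by Guo, Shamai and Verd\'u for the scalar channel, which is precisely the result the paper says it is generalizing. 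I checked the key steps: the shift identity is a correct consequence of $y_i\,\mathrm{Pois}(U_i;y_i)=U_i\,\mathrm{Pois}(U_i;y_i-1)$; the pmf ratios $P(Y+e_i|X)/P(Y|X)=U_i/(Y_i+1)$ and $P(Y+e_i)/P(Y)=\widehat U_i(Y)/(Y_i+1)$ are right, so the $(Y_i+1)$ factors do cancel and the $\log\frac{P(Y|X)}{P(Y)}$ contributions from the shifted and unshifted terms cancel exactly, leaving $\mathbb{E}[X_j\log U_i]-\mathbb{E}[X_j\log\widehat U_i(Y)]$; and the final tower-property step is valid because $\log\widehat U_i(Y)$ is $\sigma(Y)$-measurable. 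The only caveats are the ones you already flag: the interchange of differentiation with the (infinite) sum over $y$ and the integrability needed for the shift identity are exactly what the paper's standing regularity hypothesis is meant to cover, and the degenerate case $U_i=0$ (where $Y_i=0$ almost surely) must be read with the usual $0\log 0=0$ convention, just as in the statement of the theorem itself.
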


%For notational convenience, $\lambda$ is used to represent the rate of the dark current in the vector and scalar cases; by context, it is assumed understood that $\lambda$ is a vector for the vector-problem case, and it is a scalar for the scalar case.

It is also of interest to note that the gradient of mutual information for vector Poisson channels appears to admit an interpretation akin to that of the gradient of mutual information for vector Gaussian channels in (\ref{gaussian_gradient}) and (\ref{mmse_matrix}) (see also \cite{palomar2006gradient}): Both gradient results can be expressed in terms of the average of a multi-dimensional measure of the error between the input vector and the conditional mean estimate of the input vector under appropriate loss functions. %For example, it well known now that the gradient of the mutual information between the input and the output of the vector Gaussian channel with respect to the channel matrix involves the MMSE matrix \cite{palomar2006gradient}.
This interpretation can be made precise -- as well as unified -- by constructing a generalized notion of Bregman divergence that encapsulates the classical one.

%It is also of interest make comparisons to the case of a Gaussian channel \cite{guo2005mutual,palomar2006gradient}, where the gradient of mutual information is shown to be a product of the channel matrix and the MMSE matrix. A similar result holds for the vector Poisson channel,  whose gradient of mutual information can be interpreted as an average error between the logarithm of input and the conditional mean estimator.  

\section{Generalized Bregman Divergences: Definitions and Properties}
\label{sec:bregman}

The classical Bregman divergence was originally constructed to determine  common points of convex sets \cite{bregman1967relaxation}. It has been discovered later the Bregman divergence induces numerous well-known metrics and has a bijection to the exponential family \cite{Banerjee05clusteringwith}. %The classical Bregman divergence is defined in the follow way.

%The classical Bregman divergence was proposed in \cite{bregman1967relaxation}. Originally, the Bregman divergence was constructed to find common points of convex sets. It has been discovered later the Bregman divergence is able to induce numerous well-known metrics and has a bijection to the exponential family \cite{Banerjee05clusteringwith}. The classical Bregman divergence is defined in the follow way.

\begin{defn}[Classical Bregman Divergence \cite{bregman1967relaxation}]
Let $F:\Omega \to \mathbb{R}_+$ be a continuously-differentiable real-valued and strictly convex function defined on a closed convex set $\Omega$. The Bregman divergence between $x,y\in \Omega$ is defined as follows:
\begin{equation}
D_F(x,y):=F(x)-F(y)-\left\langle\nabla F(y),x-y \right\rangle. 
\end{equation}
\end{defn}
Note that different choices of the function $F$ induce different metrics. For example, Euclidean distance, Kullback-Leibler divergence, Mahalanobis distance and many other widely-used distances are specializations of the Bregman divergence associated with different choices of the function $F$ \cite{Banerjee05clusteringwith}. %Various metrics can be manifested with different choices of the function $F$.

There exist several generalizations of the classical Bregman divergence, including the extension to functional spaces \cite{frigyik2008functional} and the sub-modular extension \cite{iyer2012submodular}. However, such generalizations aim to extend the domain rather than the range of the Bregman divergence. This renders such generalizations unsuitable to problems where the ``error'' term is multi-dimensional rather than uni-dimensional, e.g. the MMSE matrix in \eqref{mmse_matrix}.

We now construct a generalization that extends the range of a Bregman divergence from scalar to matrix spaces (viewed as multi-dimensional vector spaces) to address the issue. %We refer to this new construction as a generalized Bregman divergence as opposed to the classical Bregman divergence.
We start by reviewing several notions that are useful for the definition of the generalized Bregman divergence.

\begin{defn}[Generalized Inequality \cite{boyd2004convex}]
Let $F:\Omega \to \mathbb{R}^{m\times n}$ be a continuously-differentiable function, where $\Omega \in \mathbb{R}^{l}$ is a convex subset. Let $K\subset \mathbb{R}^{m\times n}$ be a proper cone, $i.e.$, $K$ is convex, closed, with non-empty interior and pointed. We define a partial ordering $\preceq_K$ on $\mathbb{R}^{m\times n}$ as follows:
\begin{equation}
x \preceq_K y \Longleftrightarrow y-x \in K,
\end{equation}
\begin{equation}
x \prec_K y \Longleftrightarrow y-x \in int(K),
\end{equation}
where $int(\cdot)$ denotes the interior of the set. We write $x \succeq_K y$ and $x \succ_K y$ if $y \preceq_K x$ and $y \prec_K x$, respectively.

We define $F$ to be K-convex if and only if:
\begin{equation}
F(\theta x +(1-\theta)y)\preceq_K \theta F(x)+(1-\theta)F(y)
\end{equation}
for $\theta \in [0,1]$.

We define $F$ to be strictly K-convex if and only if:
\begin{equation}
F(\theta x +(1-\theta)y)\prec_K \theta F(x)+(1-\theta)F(y)
\end{equation}
for $x\neq y$ and $\theta \in (0,1)$.
\end{defn}

%Recall the Fr\'echet derivative for Banach spaces is defined as follows.

\begin{defn}[Fr\'echet Derivative \cite{folland1999real}]
Let $V$ and $Z$ be Banach spaces with norms $\Vert \cdot \Vert_V$ and $\Vert \cdot \Vert_Z$, respectively, and $U \subset V$ be open. $F:U \to Z$ is called {\em Fr\'echet differentiable} at $x\in U$, if there exists a bounded linear operator $DF(x)(\cdot):V \to Z$ such that
\begin{equation}
\lim_{\|h \|_V \to 0}\frac{\Vert F(x+h)-F(x)-DF(x)(h)\Vert_Z}{\Vert h \Vert_V}=0.
\end{equation}
$DF(x)$ is called the {\em Fr\'echet derivative} of $F$ at $x$.
\end{defn}

Note that the Fr\'echet derivative corresponds to the usual derivative of matrix calculus for finite dimensional vector spaces. However, by employing the Fr\'echet derivative, it is also possible to make extensions from finite to infinite dimensional spaces such as $L^p$ spaces.

We are now in a position to offer a definition of the generalized Bregman divergence.

\begin{defn}
\label{dfn:gbd}
Let $K \subset \mathbb{R}^{m\times n}$ be a proper cone and $\Omega$ be a convex subset in a Banach space $W$. $F:\Omega \to \mathbb{R}^{m\times n}$ is a Fr\'echet-differentiable strictly $K$-convex function. The generalized Bregman divergence $D_F(x,y)$ between $x,y\in \Omega$ is defined as follows:
\begin{equation}
D_F(x,y):=F(x)-F(y)-DF(y)(x-y), 
\end{equation}
where $DF(y)(\cdot)$ is the Fr\'{e}chet derivative of $F$ at $y$.
\end{defn}

This notion of a generalized Bregman divergence is able to incorporate various previous extensions depending on the  choices of the proper cone $K$ and the Banach space $W$. For example, if we choose $K$ to be the first quadrant (all coordinators are non-negative), we have the entry-wise convexity extension. If we choose $K$ to be the space of positive definite bounded linear operators, we have the positive definiteness extension. By choosing $W$ to be an $L^p$ space, then the definition is similar to that in \cite{frigyik2008functional}. 

The generalized Bregman divergence also inherits various properties akin to the properties of the classical Bregman divergence, that has led to its wide utilization in optimization and computer vision problems \cite{duchi2010adaptive,ben2001ordered}.

\begin{theorem}
\label{thm:gbd}
Let $K \subset \mathbb{R}^{m\times n}$ be a proper cone and $\Omega$ be a convex subset in a Banach space $W$. $F,G:\Omega \to \mathbb{R}^{m\times n}$ are Fr\'echet-differentiable strictly $K$-convex functions. Then the generalized Bregman divergence $D_F$ associated with the function $F$ exhibits the properties:
\begin{enumerate}
\item $D_F(x,y)\succeq_K$ $\mathbf{0}$.
\item $D_{c_1F+c_2G}(x,y)=c_1D_F(x,y)+c_2D_G(x,y)$ for constants $c_1, \,c_2>0$.
\item $D_F(\cdot,y)$ is $K$-convex for any $y\in \Omega$.
\end{enumerate}
\end{theorem}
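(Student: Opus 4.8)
The plan is to verify each of the three properties by reducing to the corresponding classical fact about convex functions, but working component-wise with respect to the cone ordering $\preceq_K$. The key observation is that $K$-convexity is characterized by a first-order condition analogous to the scalar case: a Fréchet-differentiable function $F$ is $K$-convex on $\Omega$ if and only if $F(x) - F(y) - DF(y)(x-y) \succeq_K \mathbf{0}$ for all $x,y \in \Omega$, and strict $K$-convexity gives $\succ_K$ when $x \neq y$. I would first establish this first-order characterization (or cite it from \cite{boyd2004convex}), since it is the engine behind all three parts. The standard argument: for fixed $x,y$, consider $g(\theta) = F(y + \theta(x-y))$; $K$-convexity of $F$ is equivalent to $\theta \mapsto g(\theta)$ being $K$-convex on $[0,1]$, and then the usual ``chord lies above tangent'' estimate — subtracting $g(0) + \theta g'(0)$ and letting the increment shrink — carries over verbatim because $K$ is a closed convex cone (closedness is exactly what lets us pass the limit $\theta \to 0^+$ through the ordering $\preceq_K$).

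Granting that, property~(1) is immediate: $D_F(x,y) = F(x) - F(y) - DF(y)(x-y) \succeq_K \mathbf{0}$ is precisely the first-order inequality. Property~(2) is a pure linearity computation: the Fréchet derivative is linear in $F$, so $D(c_1 F + c_2 G)(y)(\cdot) = c_1\, DF(y)(\cdot) + c_2\, DG(y)(\cdot)$, and substituting into the definition of $D_{c_1 F + c_2 G}(x,y)$ and regrouping gives $c_1 D_F(x,y) + c_2 D_G(x,y)$; one should note in passing that $c_1 F + c_2 G$ is again strictly $K$-convex for $c_1, c_2 > 0$ (since $K$ is a cone, positive combinations of $\prec_K$ inequalities stay in $\mathrm{int}(K)$), so the divergence is well-defined. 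For property~(3), fix $y$ and set $h(x) = D_F(x,y) = F(x) - \big(F(y) + DF(y)(x-y)\big)$. The subtracted term is affine in $x$ (constant plus a bounded linear operator applied to $x - y$), hence both $K$-convex and $K$-concave; adding the $K$-convex function $F(x)$ preserves $K$-convexity, so $h = D_F(\cdot, y)$ is $K$-convex. Again this uses only that $K$ is a convex cone, so that $F(\theta x_1 + (1-\theta)x_2) \preceq_K \theta F(x_1) + (1-\theta) F(x_2)$ combines additively with the exact affine identity for the subtracted term.

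The main obstacle — and the only place requiring genuine care rather than bookkeeping — is the first-order characterization of $K$-convexity in the Banach-space / Fréchet-derivative setting, specifically justifying the limiting step $\lim_{\theta \to 0^+} \tfrac{1}{\theta}\big(g(\theta) - g(0)\big) = g'(0) = DF(y)(x-y)$ and concluding that the resulting limit respects $\preceq_K$. This needs: (a) that the difference quotients $\tfrac{1}{\theta}\big(F(y+\theta(x-y)) - F(y) - \theta\, DF(y)(x-y)\big)$, which are $\succeq_K \mathbf{0}$ for each $\theta \in (0,1]$ by $K$-convexity applied along the segment, converge in norm to $D_F(x,y)$ — this is exactly the definition of the Fréchet derivative — and (b) that the cone $K$ is closed, so the limit of a net in $K$ stays in $K$. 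Both ingredients are available: (a) from the Fréchet-differentiability hypothesis and (b) from the definition of a proper cone. Everything else is routine manipulation of the definition, so I would keep the write-up of parts (1)–(3) brief and spend what little space is needed on this characterization.
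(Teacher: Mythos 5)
The conference version of the paper omits the proof of Theorem~\ref{thm:gbd}, so there is nothing to compare against line by line; your overall strategy is the natural one and is sound: derive the first-order characterization of $K$-convexity (the chord inequality along the segment plus closedness of $K$ to pass the limit $\theta\to 0^+$ through $\preceq_K$) to obtain property (1), use linearity of the Fr\'echet derivative for property (2), and observe that $D_F(\cdot,y)$ is $F$ minus an affine map for property (3). All three reductions are correct, and you correctly isolate the two ingredients that make the limiting step legitimate: norm convergence of the difference quotient to $DF(y)(x-y)$ (Fr\'echet differentiability) and closedness of the proper cone $K$.

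One formula in your final paragraph is wrong, however, and as written it would make the argument circular. The quantity $\frac{1}{\theta}\bigl(F(y+\theta(x-y))-F(y)-\theta\,DF(y)(x-y)\bigr)$ equals $\frac{1}{\theta}D_F\bigl(y+\theta(x-y),\,y\bigr)$: its nonnegativity is an instance of property (1), the very statement being proved, and it converges to $\mathbf{0}$, not to $D_F(x,y)$. The quantity you want is $F(x)-F(y)-\frac{1}{\theta}\bigl(F(y+\theta(x-y))-F(y)\bigr)$, which is $\succeq_K \mathbf{0}$ for every $\theta\in(0,1]$ directly from $F(\theta x+(1-\theta)y)\preceq_K \theta F(x)+(1-\theta)F(y)$, and which converges in norm to $D_F(x,y)$; closedness of $K$ then yields property (1). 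With that substitution the write-up is complete.
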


The generalized Bregman divergence also exhibits a duality property similar to the duality property of the classical Bregman divergence, that may be useful for many optimization problems \cite{ben2001ordered,agarwal2012information}. 

\begin{theorem}
\label{thm:duality}
Let $F:\Omega \to \mathbb{R}^{m\times n}$ be a strictly $K$-convex function, where $\Omega \subset \mathbb{R}^k$ is a convex subset. Choose $K$ to be the space of first quadrant $\mathbb{R}^{m \times n}_+$ (space formed by matrices with all entries positive). Let $(F^\star,x^\star,y^\star)$ be the Legendre transform of $(F,x,y)$. Then, we have that:
\begin{equation}
D_F(x,y)=D_{F^\star}(y^\star,x^\star).
\end{equation}
\end{theorem}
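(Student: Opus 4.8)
The plan is to reduce the matrix-valued identity to the classical scalar Bregman duality applied entrywise. First I would observe that, because $K$ is chosen to be the first quadrant $\mathbb{R}^{m\times n}_+$, strict $K$-convexity of $F=[F_{ij}]$ is equivalent to strict convexity (in the usual scalar sense) of each component $F_{ij}:\Omega\to\mathbb{R}$, and Fr\'echet differentiability of $F$ amounts to differentiability of every $F_{ij}$, with the $(i,j)$ entry of $DF(y)(h)$ equal to $\langle\nabla F_{ij}(y),h\rangle$. Consequently the generalized Bregman divergence decomposes entrywise,
\[
\left[D_F(x,y)\right]_{ij}=F_{ij}(x)-F_{ij}(y)-\langle\nabla F_{ij}(y),x-y\rangle=D_{F_{ij}}(x,y),
\]
so that $D_F$ is simply the matrix whose entries are the classical Bregman divergences of the component functions.

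Next I would make precise the Legendre transform of the triple $(F,x,y)$ in this setting: for each $(i,j)$ set $F_{ij}^\star(u)=\sup_{v\in\Omega}\big(\langle u,v\rangle-F_{ij}(v)\big)$ and define the dual points entrywise by $x_{ij}^\star=\nabla F_{ij}(x)$ and $y_{ij}^\star=\nabla F_{ij}(y)$, so that $F^\star=[F_{ij}^\star]$, $x^\star=[x_{ij}^\star]$ and $y^\star=[y_{ij}^\star]$, and $D_{F^\star}(y^\star,x^\star)$ is read off entrywise as $[D_{F_{ij}^\star}(y_{ij}^\star,x_{ij}^\star)]$. Under the standing assumption that each $F_{ij}$ is of Legendre type (essentially smooth and essentially strictly convex), the gradient map $\nabla F_{ij}$ is a bijection from $\mathrm{int}\,\Omega$ onto $\mathrm{int}(\mathrm{dom}\,F_{ij}^\star)$ with inverse $\nabla F_{ij}^\star$, the Fenchel--Young equality $F_{ij}(v)+F_{ij}^\star(\nabla F_{ij}(v))=\langle v,\nabla F_{ij}(v)\rangle$ holds, and $F_{ij}^\star$ is itself strictly convex and differentiable on that domain, so $D_{F_{ij}^\star}$ is well defined.

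With these ingredients the proof is the classical Fenchel computation carried out componentwise: expanding
\[
D_{F_{ij}^\star}(y_{ij}^\star,x_{ij}^\star)=F_{ij}^\star(y_{ij}^\star)-F_{ij}^\star(x_{ij}^\star)-\langle\nabla F_{ij}^\star(x_{ij}^\star),\,y_{ij}^\star-x_{ij}^\star\rangle
\]
and substituting $\nabla F_{ij}^\star(x_{ij}^\star)=x$, $F_{ij}^\star(x_{ij}^\star)=\langle x,x_{ij}^\star\rangle-F_{ij}(x)$ and $F_{ij}^\star(y_{ij}^\star)=\langle y,y_{ij}^\star\rangle-F_{ij}(y)$ collapses the right-hand side to $F_{ij}(x)-F_{ij}(y)-\langle y_{ij}^\star,x-y\rangle=D_{F_{ij}}(x,y)$. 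Reassembling over all $(i,j)$ then gives $D_F(x,y)=[D_{F_{ij}}(x,y)]=[D_{F_{ij}^\star}(y_{ij}^\star,x_{ij}^\star)]=D_{F^\star}(y^\star,x^\star)$, which is the claim.

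I expect the only genuine obstacle to be the regularity bookkeeping in the second step: one must guarantee that the entrywise conjugates $F_{ij}^\star$ inherit the differentiability and strict convexity needed for $D_{F^\star}$ to qualify as a generalized Bregman divergence, and that each gradient map inverts correctly so the Fenchel--Young identities are tight. Once the component functions are taken to be of Legendre type this is standard convex analysis, and the remaining algebraic identity is routine.
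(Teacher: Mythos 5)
Your proof is correct and follows what is evidently the intended route: the choice of $K=\mathbb{R}^{m\times n}_+$ makes strict $K$-convexity equivalent to entrywise strict convexity, the generalized Bregman divergence decomposes into the matrix of classical Bregman divergences $D_{F_{ij}}(x,y)$, and the identity then reduces to the standard Fenchel--Young computation applied componentwise (the paper states this theorem without proof, so there is nothing to contrast against). Your explicit flagging of the Legendre-type regularity needed for $\nabla F_{ij}$ to invert to $\nabla F_{ij}^\star$, and your reading of $x^\star$, $y^\star$ as the entrywise families of dual points, are exactly the bookkeeping the paper's terse statement of ``the Legendre transform of $(F,x,y)$'' leaves implicit.
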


Via this theorem, it is possible to simplify the calculation of the Bregman divergence in scenarios where the dual form is easier to calculate than the original form. Mirror descent methods, which have been shown to be computationally efficient for many optimization problems \cite{ben2001ordered,nemirovsky1983problem}, leverage this idea.

The generalized Bregman divergence also exhibits another property akin to that of the classical Bregman divergence. In particular, it has been shown that for a metric that can be expressed in terms of the classical Bregman divergence then the optimal error relates to the conditional mean estimator \cite{Banejee2005}. Similarly, it can also be shown that for a metric that can be expressed in terms of a generalized Bregman divergence the optimal error also relates to the conditional mean estimator. However, this generalization from the scalar to the vector case requires the partial order interpretation of the minimization.

%If a metric can be written in a form of the classical Bregman divergence, then the optimal error is related to the conditional mean estimator \cite{Banejee2005}. The following theorem states that the conditional mean minimizes the generalized Bregman divergence provided that the minimization is interpreted in the partial ordering sensing.

\begin{theorem}
Consider a probability space $(\mathcal{S}, s, \mu)$. Let $F:\Omega\to \mathbb{R}^{m\times n}$ be strictly $K$-convex as before and $\Omega$ is a convex subset in a Banach space $W$. Let $X:\mathcal{S}\to \Omega$ be a random variable with $\mathbb{E}\left[\Vert X\Vert\right]<\infty$ and $\mathbb{E}\left[\Vert F(X)\Vert\right]<\infty$. Let $s_1\subset s$ be a sub $\sigma$-algebra. Then, for any $s_1$-measurable random variable $y$, we have that:
\begin{equation}
\arg\min_y \mathbb{E}\left[ D_F(X,Y)\right]=\mathbb{E}\left[ X|s_1\right],
\end{equation}
where the minimization is interpreted in the partial ordering sensing, i.e., if $\exists\, Y'$ such that $\mathbb{E}[D_F(X,Y')]\preceq_K \mathbb{E}[D_F(X,\mathbb{E}\left[ X|s_1\right])]$, then $Y'=\mathbb{E}\left[ X|s_1\right]$. 
\end{theorem}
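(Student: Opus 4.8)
The plan is to prove a single ``Bregman Pythagorean'' identity in the partial-order setting and then read off both assertions of the theorem from it. Write $\hat X := \mathbb{E}[X\mid s_1]$. The goal is to show that for every $s_1$-measurable $Y$ with values in $\Omega$,
\begin{equation}
\mathbb{E}\left[D_F(X,Y)\right] = \mathbb{E}\left[D_F(X,\hat X)\right] + \mathbb{E}\left[D_F(\hat X,Y)\right]. \nonumber
\end{equation}
Granting this, property~1 of Theorem~\ref{thm:gbd} gives $\mathbb{E}[D_F(\hat X,Y)]\succeq_K \mathbf 0$, hence $\mathbb{E}[D_F(X,Y)]\succeq_K \mathbb{E}[D_F(X,\hat X)]$, i.e.\ $\hat X$ is minimal in the stated sense. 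For the uniqueness clause, if some $s_1$-measurable $Y'$ satisfies $\mathbb{E}[D_F(X,Y')]\preceq_K \mathbb{E}[D_F(X,\hat X)]$, the identity forces $\mathbb{E}[D_F(\hat X,Y')]\preceq_K \mathbf 0$; combined with $\mathbb{E}[D_F(\hat X,Y')]\succeq_K \mathbf 0$ and pointedness of $K$ this gives $\mathbb{E}[D_F(\hat X,Y')]=\mathbf 0$, from which I would deduce $Y'=\hat X$ almost surely.

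To establish the identity I would expand each divergence by Definition~\ref{dfn:gbd} and subtract, obtaining
\begin{align}
D_F(X,Y)-D_F(X,\hat X) &= F(\hat X)-F(Y)-DF(Y)(X-Y) \nonumber \\
&\quad + DF(\hat X)(X-\hat X). \nonumber
\end{align}
The crucial step is to evaluate the two derivative terms under expectation. Since $Y$ is $s_1$-measurable, the bounded linear operator $DF(Y)(\cdot)$ is itself $s_1$-measurable, so by the tower property and linearity of conditional expectation $\mathbb{E}[DF(Y)(X-Y)] = \mathbb{E}[DF(Y)(\mathbb{E}[X\mid s_1]-Y)] = \mathbb{E}[DF(Y)(\hat X-Y)]$, and likewise $\mathbb{E}[DF(\hat X)(X-\hat X)] = \mathbf 0$. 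Substituting back gives $\mathbb{E}[D_F(X,Y)]-\mathbb{E}[D_F(X,\hat X)] = \mathbb{E}[F(\hat X)-F(Y)-DF(Y)(\hat X-Y)] = \mathbb{E}[D_F(\hat X,Y)]$, which is the identity. This mirrors the classical argument that the conditional mean minimizes an average Bregman loss, now with the inner product replaced by the Fr\'echet-derivative action.

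The last missing piece is the passage from $\mathbb{E}[D_F(\hat X,Y')]=\mathbf 0$ to $Y'=\hat X$ a.s. Here I would use that $D_F(\hat X,Y')\succeq_K \mathbf 0$ pointwise (Theorem~\ref{thm:gbd}) together with the vanishing mean: testing against any $\phi$ in the dual cone $K^\star\subset\mathbb{R}^{m\times n}$ yields $\langle\phi, D_F(\hat X,Y')\rangle\ge 0$ a.s.\ with zero expectation, hence $\langle\phi, D_F(\hat X,Y')\rangle=0$ a.s.; ranging $\phi$ over a countable collection of functionals that determines membership in $K\cap(-K)=\{\mathbf 0\}$ forces $D_F(\hat X,Y')=\mathbf 0$ a.s. Strict $K$-convexity of $F$ (the strict first-order condition $D_F(x,y)\succ_K\mathbf 0$ whenever $x\ne y$, which sharpens property~1 of Theorem~\ref{thm:gbd}) then gives $Y'=\hat X$ a.s.

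The main obstacle is the measure-theoretic bookkeeping in the derivative step: one must verify that $\omega\mapsto DF(Y(\omega))$ is a strongly measurable family of bounded linear operators, that $DF(Y)(X-Y)$, $F(X)$, $F(\hat X)$, $F(Y)$ are Bochner integrable, and hence that the operator may legitimately be pulled inside the (Banach-space-valued) conditional expectation. This is exactly where the hypotheses $\mathbb{E}\|X\|<\infty$, $\mathbb{E}\|F(X)\|<\infty$ and the regularity assumed in the statement are needed. By contrast, the separating-functional argument in the uniqueness part is routine, since the range $\mathbb{R}^{m\times n}$ is finite dimensional, so all genuine subtlety lives on the domain side.
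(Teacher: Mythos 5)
Your proposal is correct and follows essentially the same route the paper takes (the generalization of the scalar argument of Banerjee et al.\ that the text explicitly invokes): the Pythagorean decomposition $\mathbb{E}[D_F(X,Y)]=\mathbb{E}[D_F(X,\hat X)]+\mathbb{E}[D_F(\hat X,Y)]$ obtained by pulling the $s_1$-measurable operator $DF(Y)(\cdot)$ through the conditional expectation, followed by cone-nonnegativity for optimality and pointedness plus the strict first-order condition $D_F(x,y)\succ_K \mathbf{0}$ for $x\neq y$ for uniqueness. The only points you assert rather than prove --- the strict first-order inequality (a one-line sharpening via the midpoint), and the Bochner-integrability needed to justify taking $DF(Y)$ out of the conditional expectation --- are exactly the regularity the paper itself assumes implicitly, so there is no substantive gap.
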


\section{Gradient of Mutual Information: A Generalized Bregman Divergences Perspective}
\label{sec:GBD}

We now re-visit the gradient of mutual information for vector Poisson channel models and for vector Gaussian channel models with respect to the scaling/channel matrix, under the light of the generalized Bregman divergence.

The interpretation of the gradient results for vector Poisson and vector Gaussian channels, i.e., as the average of a multi-dimensional generalization of the error between the input vector and the conditional mean estimate of the input vector under appropriate loss functions, together with the properties of the generalized Bregman divergences pave the way to the unification of the various Theorems. In particular, we offer two Theorems that reveal that the gradient of mutual information for vector Poisson and vector Gaussian channels admit a representation that involves the average of the generalized Bregman divergence between the channel input $X$ and the conditional mean estimate of the channel input $\mathbb{E}\left[ X|Y\right]$ under appropriate choices of the vector-valued loss functions.

%The gradient result in Theorem \ref{thm:poisson} is akin to the gradient in \eqref{gaussian_gradient} in the sense that it involves the conditional mean estimator, as recognized in \cite{guo2008mutual}.

 %As we see in Theorem \ref{thm:poisson}, even though not being exactly the same, the result there has a similar interpretation in that the gradient of mutual information is a function of the input and the conditional mean estimator.

%We now show that the gradient of the mutual information both for the vector Gaussian and the vector Poisson channel models can be unified via generalized Bregman divergence.

%In particular, the following two theorems reveal that the gradient of mutual information for Gaussian and Poisson channels can be represented as the expected generalized Bregman divergence between the input $X$ and the conditional mean estimate $\mathbb{E}\left[ X|Y\right]$ under an appropriate choice of the associated function vector generalizations of loss functions.

%\subsection{The Vector Poisson Channel}

%The following Theorems re-express the gradient results for the vector valued Poisson and Gaussian models in \eqref{vector_model} and \eqref{gaussian_model}, respectively, in terms of generalized Bregman divergences.

%For Poisson channels, we have the following theorem.

\begin{theorem}
\label{thm:poisbregman}
The gradient of mutual information with respect to the scaling matrix for the vector Poisson channel model in \eqref{vector_model} can be represented as follows:
 \begin{equation}
  \nabla_\Phi I(X;Y)=\mathbb{E}\left[ D_F(X,\mathbb{E}[X|Y]) \right], 
 \end{equation}
 where $D_F \left(\cdot,\cdot\right)$ is a generalized Bregman divergence associated with the function 
\begin{equation}
F(x)=x(\log(\Phi x+\lambda))^T-[x,\dots,x]+[\mathbf{1},\dots,\mathbf{1}]^T,
 \end{equation}
where $\mathbf{1}=[1,\dots,1]^T$.
\end{theorem}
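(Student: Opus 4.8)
The plan is to start from the closed-form expression for $\nabla_\Phi I(X;Y)$ furnished by Theorem~\ref{thm:poisson} and verify, entry by entry, that the proposed generalized Bregman representation reproduces it exactly. Before carrying out the computation one should check that $F$ legitimately generates a generalized Bregman divergence in the sense of Definition~\ref{dfn:gbd}: $F$ is plainly Fr\'echet-differentiable on the interior of the positive orthant (each of its entries, the one indexed by output index $i$ and input index $j$ being $x_j\log((\Phi x+\lambda)_i)-x_j+1$, is smooth there), so the real issue is exhibiting a proper cone $K$ for which $F$ is strictly $K$-convex. I expect this to be the main obstacle, because the naive choice $K=\mathbb{R}^{m\times n}_+$ does \emph{not} work once $n\ge 2$ --- the scalar map $x\mapsto x_j\log((\Phi x+\lambda)_i)$ is not convex in $x$ in general --- so one must identify a cone $K$ adapted to the Hessian structure of $F$ (or argue strict $K$-convexity directly from the defining inequality), and only then is $D_F$ well defined.

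Granting that, the computation itself is short. First I would record the Fr\'echet derivative: the entry of $DF(y)(h)$ indexed by output $i$ and input $j$ equals $h_j\log((\Phi y+\lambda)_i)+y_j(\Phi h)_i/(\Phi y+\lambda)_i-h_j$. Substituting $x=X$ and $y=\hat X$, where $\hat X:=\mathbb{E}[X|Y]$, into $D_F(x,y)=F(x)-F(y)-DF(y)(x-y)$ and collecting terms, the pieces that are linear in $x$ and the constant pieces cancel identically, leaving the entrywise identity $[D_F(X,\hat X)]_{ij}=X_j\log\big((\Phi X+\lambda)_i/(\Phi\hat X+\lambda)_i\big)-\hat X_j(\Phi X+\lambda)_i/(\Phi\hat X+\lambda)_i+\hat X_j$.

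Then I would take expectations and invoke the tower property together with the $Y$-measurability of $\hat X$. Since $\Phi$ and $\lambda$ are deterministic, linearity of conditional expectation gives $\mathbb{E}[(\Phi X+\lambda)_i|Y]=(\Phi\hat X+\lambda)_i$; conditioning the middle term on $Y$ therefore yields $\mathbb{E}[\hat X_j(\Phi X+\lambda)_i/(\Phi\hat X+\lambda)_i]=\mathbb{E}[\hat X_j]$, which exactly cancels the last term. (Equivalently, $\mathbb{E}[DF(\hat X)(X-\hat X)]=\mathbb{E}[DF(\hat X)(\mathbb{E}[X-\hat X|Y])]=0$ by linearity of the operator $DF(\hat X)$, so that $\mathbb{E}[D_F(X,\hat X)]=\mathbb{E}[F(X)]-\mathbb{E}[F(\hat X)]$.) What remains is $\mathbb{E}[X_j\log(\Phi X+\lambda)_i]-\mathbb{E}[X_j\log(\Phi\hat X+\lambda)_i]$, and since $\log(\Phi\hat X+\lambda)_i$ is $Y$-measurable the second term equals $\mathbb{E}[\mathbb{E}[X_j|Y]\log\mathbb{E}[(\Phi X+\lambda)_i|Y]]$. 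This matches $\nabla_\Phi I(X;Y)_{ij}$ in Theorem~\ref{thm:poisson} entry by entry, completing the proof. All interchanges of expectation with differentiation and with conditioning are covered by the standing regularity and integrability hypotheses, so apart from the $K$-convexity verification the argument is a routine calculation.
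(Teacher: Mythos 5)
Your calculation is correct and is exactly the intended argument: compute $DF(y)(h)$ entrywise, form $D_F(X,\hat X)$ with $\hat X=\mathbb{E}[X|Y]$, and use the tower property plus the $Y$-measurability of $\hat X$ (equivalently, $\mathbb{E}[DF(\hat X)(X-\hat X)]=\mathbb{E}[DF(\hat X)(\mathbb{E}[X-\hat X\mid Y])]=0$, so $\mathbb{E}[D_F(X,\hat X)]=\mathbb{E}[F(X)]-\mathbb{E}[F(\hat X)]$) to recover the expression of Theorem~\ref{thm:poisson} entry by entry. The paper itself gives no proof of this theorem, but the proofs of its scalar corollaries make clear that this direct verification against Theorem~\ref{thm:poisson} is precisely the intended route, so your derivation matches the paper's approach. (A trivial bookkeeping point: as written, $F$ maps into $\mathbb{R}^{n\times m}$ while $\nabla_\Phi I$ is $m\times n$, so a transpose is implicit; this mismatch is in the paper's statement, not in your argument.)

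The one substantive issue you raise --- that $F$ must be strictly $K$-convex for some proper cone $K$ before $D_F$ qualifies as a generalized Bregman divergence under Definition~\ref{dfn:gbd} --- is a genuine point, and your observation that entrywise convexity fails for $n\ge 2$ is right: the $(j,i)$ entry $x_j\log((\Phi x+\lambda)_i)$ has $\partial^2/\partial x_k^2=-x_j\Phi_{ik}^2/((\Phi x+\lambda)_i)^2\le 0$ for $k\ne j$, so $K=\mathbb{R}^{m\times n}_+$ does not work in general (it does work for $m=n=1$, which is why the scalar corollary is unproblematic). You leave this unresolved, but so does the paper; it is a defect of the theorem's packaging rather than of the gradient identity, since $D_F(x,y)=F(x)-F(y)-DF(y)(x-y)$ is well defined for any Fr\'echet-differentiable $F$ and the identity you verified holds regardless. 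The only consequence of the unresolved convexity is that one cannot automatically invoke the Bregman properties of Theorem~\ref{thm:gbd} (e.g.\ $D_F\succeq_K \mathbf{0}$) for this particular $F$. Flagging this explicitly, as you did, is appropriate; a complete proof in the paper's own terms would need either to exhibit a suitable cone $K$ or to weaken the ``generalized Bregman divergence'' label to the bare difference formula.
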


\begin{theorem}
\label{thm:gaussianbregman}
The gradient of mutual information with respect to the channel matrix for the vector Gaussian channel model in \eqref{gaussian_model} can be represented as follows: 
%$\nabla_\Phi I(X;Y)$ of the Gaussian channel can be represented as
 \begin{equation}
  \nabla_\Phi I(X;Y)=\mathbb{E}\left[ D_F(X,\mathbb{E}[X|Y]) \right], 
 \end{equation}
 where $D_F (\cdot,\cdot) $ is a generalized Bregman divergence associated with the function
\begin{equation}
F(x)=\Phi xx^T.
\end{equation}
\end{theorem}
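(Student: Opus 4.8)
\emph{Proof proposal.} The plan is to evaluate the generalized Bregman divergence $D_F$ associated with $F(x)=\Phi xx^T$ in closed form, then substitute $y=\mathbb{E}[X|Y]$, take expectations, and recognize the result as $\Phi E$ with $E$ the MMSE matrix of \eqref{mmse_matrix}; the claim then follows at once from the Palomar--Verd\'u identity \eqref{gaussian_gradient}. First I would compute the Fr\'echet derivative of $F$ at a point $y$. Expanding
\[ F(y+h)-F(y)=\Phi\big((y+h)(y+h)^T-yy^T\big)=\Phi\big(yh^T+hy^T\big)+\Phi hh^T, \]
and noting that $h\mapsto\Phi(yh^T+hy^T)$ is a bounded linear operator while $\|\Phi hh^T\|=o(\|h\|)$, we obtain $DF(y)(h)=\Phi(yh^T+hy^T)$. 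Plugging this into Definition \ref{dfn:gbd} gives
\[ D_F(x,y)=\Phi xx^T-\Phi yy^T-\Phi\big(y(x-y)^T+(x-y)y^T\big), \]
and a short expansion of the right-hand side collapses the bracket, yielding the key identity $D_F(x,y)=\Phi(x-y)(x-y)^T$.

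With this in hand I would set $y=\mathbb{E}[X|Y]$ and take the expectation over $(X,Y)$. Since $\Phi$ is deterministic, linearity of expectation gives
\[ \mathbb{E}\big[D_F(X,\mathbb{E}[X|Y])\big]=\Phi\,\mathbb{E}\big[(X-\mathbb{E}[X|Y])(X-\mathbb{E}[X|Y])^T\big]=\Phi E, \]
with $E$ exactly the MMSE matrix defined in \eqref{mmse_matrix}. Combining with the Palomar--Verd\'u gradient formula $\nabla_\Phi I(X;Y)=\Phi E$ from \cite{palomar2006gradient}, i.e.\ \eqref{gaussian_gradient}, completes the argument.

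The core computation is thus immediate; the one point requiring care is the admissibility of $F$ in the sense of Definition \ref{dfn:gbd} — namely exhibiting a proper cone $K\subset\mathbb{R}^{m\times n}$ with respect to which $F$ is $K$-convex, so that $D_F$ is well defined and $D_F\succeq_K\mathbf{0}$. The same expansion as above shows $\theta F(x)+(1-\theta)F(y)-F(\theta x+(1-\theta)y)=\theta(1-\theta)\,\Phi(x-y)(x-y)^T$, so it suffices to take $K$ to be a proper cone containing $\{\Phi vv^T:v\in\mathbb{R}^n\}$ (for instance the image of the positive semidefinite cone under $\Phi$, restricted to the range of $\Phi$ so as to be pointed with non-empty interior). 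Reconciling this with the strictness requirement — rank-one matrices such as $\Phi(x-y)(x-y)^T$ sit on the boundary of such cones — is the slightly delicate step and the main obstacle; it is handled by an appropriate choice/restriction of $K$, and in any case it does not affect the closed-form identity for $D_F$ on which the representation rests.
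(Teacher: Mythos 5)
Your proposal is correct and follows essentially the same route the paper intends: computing $D_F(x,y)=\Phi(x-y)(x-y)^T$ from Definition~\ref{dfn:gbd}, taking expectations to obtain $\Phi E$, and invoking the Palomar--Verd\'u identity \eqref{gaussian_gradient} --- the paper states the theorem without an explicit proof, but its scalar corollary points to exactly this calculation. The cone-admissibility caveat you flag is genuine (strict $K$-convexity of $F(x)=\Phi xx^T$ fails outright when $\Phi$ has a nontrivial kernel, precisely the compressive case of interest), but the paper leaves this unaddressed as well, so it is not a gap relative to the paper's own argument.
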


Atar and Weissman \cite{atar2012mutual} have also recognized that the derivative of mutual information with respect to the scaling for the scalar Poisson channel could also be represented in terms of a (classical) Bregman divergence. Such a result applicable to the scalar Poisson channel as well as a result applicable to the scalar Gaussian channel can be seen to be Corollaries to Theorems \ref{thm:poisbregman} and \ref{thm:gaussianbregman}, respectively, in view of the fact that the classical Bregman divergence is a specialization of the generalized one.

%Now this result can be restated as a corollary from Theorem \ref{thm:poisbregman}.

\begin{cor}
The derivative of mutual information with respect to the scaling factor for the scalar Poisson channel model is given by:
\begin{equation}
\frac{\partial}{\partial \phi}I(X;Y)=\mathbb{E}\left[ D_F(X,\mathbb{E}[X|Y]) \right],
\end{equation}
where $F(x)=x\log (\phi x)-x+1$.
\end{cor}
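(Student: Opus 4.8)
\emph{Proof plan.} This corollary is the $m=n=1$, $\lambda=0$ specialization of Theorem~\ref{thm:poisbregman}, so the plan is twofold: first reduce the matrix-valued generator appearing there to the stated scalar generator, and then, as a consistency check, verify that the resulting classical Bregman divergence reproduces the scalar derivative formula obtained as the corollary to Theorem~\ref{thm:poisson}.

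First I would set $m=n=1$ in Theorem~\ref{thm:poisbregman}. All the vector and matrix objects collapse: $\Phi$ becomes the scalar $\phi$, the outer product $x(\log(\Phi x+\lambda))^T$ becomes $x\log(\phi x)$ (taking $\lambda=0$; the general dark-current case merely carries $\log(\phi x+\lambda)$ throughout), the term $[x,\dots,x]$ becomes $x$, and $[\mathbf{1},\dots,\mathbf{1}]^T$ becomes $1$. Hence $F(x)=x\log(\phi x)-x+1$. Moreover, taking $K=\mathbb{R}_+$ and noting that in one dimension the Fr\'echet derivative of Definition~\ref{dfn:gbd} is the ordinary derivative, the generalized Bregman divergence $D_F$ coincides with the classical Bregman divergence. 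This already yields the statement; the remaining steps are a sanity check that the claimed formula agrees with the earlier scalar result.

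For that check I would compute $D_F$ explicitly. Since $F'(x)=\log(\phi x)$ (the $-x$ and $+1$ contribute a cancelling $-1$ and $0$), one obtains
\begin{equation}
D_F(x,y)=x\log(\phi x)-x+1-\big(y\log(\phi y)-y+1\big)-\log(\phi y)(x-y)=x\log\frac{x}{y}-x+y .
\end{equation}
Substituting $y=\mathbb{E}[X|Y]$ and taking expectations, the linear terms cancel because $\mathbb{E}[\mathbb{E}[X|Y]]=\mathbb{E}[X]$, leaving $\mathbb{E}[X\log X]-\mathbb{E}[X\log\mathbb{E}[X|Y]]$. It remains to match this with the corollary to Theorem~\ref{thm:poisson} at $\lambda=0$, namely $\mathbb{E}[X\log(\phi X)]-\mathbb{E}[\mathbb{E}[X|Y]\log(\phi\,\mathbb{E}[X|Y])]$: expanding $\log(\phi\,\cdot)=\log\phi+\log(\cdot)$, the $\log\phi$ terms cancel (again by $\mathbb{E}[\mathbb{E}[X|Y]]=\mathbb{E}[X]$), and the identity $\mathbb{E}[\mathbb{E}[X|Y]\log\mathbb{E}[X|Y]]=\mathbb{E}[X\log\mathbb{E}[X|Y]]$, which is $\mathbb{E}[g(Y)X]=\mathbb{E}[g(Y)\mathbb{E}[X|Y]]$ with $g(Y)=\log\mathbb{E}[X|Y]$, completes the reconciliation.

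The only genuinely delicate point is bookkeeping rather than mathematics: one must track the additive constant and the linear term in $F$ (which drop out of $D_F$, so the ``$+1$'' is cosmetic and strict convexity, guaranteed by $F''(x)=1/x>0$ on $\mathbb{R}_+$, is the only structural property actually invoked), and one must apply the tower property in the correct direction when passing between $\mathbb{E}[X\log\mathbb{E}[X|Y]]$ and $\mathbb{E}[\mathbb{E}[X|Y]\log\mathbb{E}[X|Y]]$. No regularity assumptions beyond those already used for Theorem~\ref{thm:poisson} are required.
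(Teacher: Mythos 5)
Your proposal is correct and follows the same route as the paper: the paper's proof simply specializes Theorem~\ref{thm:poisbregman} to $m=n=1$ to read off $F(x)=x\log(\phi x)-x+1$ and declares the verification against the scalar gradient formula ``straightforward.'' You carry out that straightforward verification explicitly (computing $D_F(x,y)=x\log(x/y)-x+y$ and reconciling via the tower property), which is a useful elaboration but not a different argument.
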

\begin{proof}
By Theorem \ref{thm:poisbregman}, we have $F(x)=x\log (\phi x)-x+1$. It is straightforward to verify that $F(x)$ induces the scalar gradient result.
\end{proof}

\begin{cor}
The derivative of mutual information with respect to the scaling factor for the scalar Gaussian channel model is given by:
\begin{equation}
\label{eq:gaussiangradient}
\frac{\partial}{\partial \phi}I(X;Y) = \mathbb{E}\left[ D_F(X,\mathbb{E}[X|Y]) \right],
\end{equation}
where $F(x)=\phi x^2$.
\end{cor}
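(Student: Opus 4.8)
The plan is to obtain this corollary as the scalar specialization ($m=n=1$) of Theorem~\ref{thm:gaussianbregman}, in exact parallel with the way the preceding corollary specializes Theorem~\ref{thm:poisbregman}. First I would set $\Phi=\phi$, a positive scalar, so that the matrix-valued function $F(x)=\Phi xx^T$ appearing in Theorem~\ref{thm:gaussianbregman} collapses to the real-valued function $F(x)=\phi x^2$ on $\Omega=\mathbb{R}$. For $\phi>0$ this $F$ is strictly convex in the ordinary sense, so the proper cone $K$ degenerates to $\mathbb{R}_+$ and the generalized Bregman divergence of Definition~\ref{dfn:gbd} coincides with the classical Bregman divergence of the opening definition; in particular the strict $K$-convexity hypothesis of Theorem~\ref{thm:gaussianbregman} is automatically met.

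Next I would evaluate $D_F(x,y)$ explicitly. The Fr\'echet derivative of $F(x)=\phi x^2$ at $y$ is the bounded linear map $h\mapsto 2\phi y\,h$, so Definition~\ref{dfn:gbd} gives
\begin{equation}
D_F(x,y)=\phi x^2-\phi y^2-2\phi y(x-y)=\phi(x-y)^2 . \nonumber
\end{equation}
Substituting $y=\mathbb{E}[X|Y]$ and taking expectations yields $\mathbb{E}\!\left[D_F(X,\mathbb{E}[X|Y])\right]=\phi\,\mathbb{E}\!\left[(X-\mathbb{E}[X|Y])^2\right]$.

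Finally I would match this quantity against the known scalar Gaussian identity. Specializing \eqref{gaussian_gradient}--\eqref{mmse_matrix} to $m=n=1$ (equivalently, invoking the Guo--Shamai--Verd\'u relation for the channel $Y=\phi X+N$), one has $\frac{\partial}{\partial\phi}I(X;Y)=\phi E$ with $E=\mathbb{E}\!\left[(X-\mathbb{E}[X|Y])^2\right]$, which is precisely the expression obtained above; this closes the argument. It is also consistent to check this directly from Corollary~1: differentiating $I(X;Y)$ through the scalar mutual-information formula and simplifying recovers the same $\phi\,\mathrm{mmse}$ expression.

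I do not expect a substantive obstacle; the argument is a routine verification. The only points that deserve an explicit line are (i) confirming that the matrix gradient $\nabla_\Phi I$ of Theorem~\ref{thm:gaussianbregman} genuinely reduces to the ordinary derivative $\frac{\partial}{\partial\phi}I$ when the channel is scalar, and (ii) noting that the degeneracy of the cone $K$ to $\mathbb{R}_+$ makes the generalized and classical Bregman divergences agree, so that the referenced scalar Bregman-divergence result of Atar and Weissman~\cite{atar2012mutual} is indeed recovered as a special case.
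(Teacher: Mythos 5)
Your proposal is correct and follows the paper's own route: specialize Theorem~\ref{thm:gaussianbregman} to the scalar case so that $F(x)=\phi x^2$, compute $D_F(x,y)=\phi(x-y)^2$, and invoke the known relation $\frac{\partial}{\partial \phi}I(X;Y)=\phi\,\mathbb{E}\left[(X-\mathbb{E}[X|Y])^2\right]$ from \cite{palomar2006gradient}. The only difference is that you spell out the ``simple calculation'' the paper leaves implicit, which is a welcome addition rather than a deviation.
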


\begin{proof}
By Theorem \ref{thm:gaussianbregman}, $F(x)=\phi x^2$. (\ref{eq:gaussiangradient}) follows from a simple calculation and the result from \cite{palomar2006gradient} that $\frac{\partial}{\partial \phi}I(X;Y)=\phi \mathbb{E}[(X-\mathbb{E}(X|Y))^2]$  
\end{proof}

%\subsection{The Vector Gaussian Channel}
%We also present the parallel result for Gaussian channels. 
%\begin{theorem}
%\label{thm:gaussianbregman}
%The gradient of the mutual information with respect to the channel matrix for the vector Gaussian channel model in \eqref{gaussian_model} can be represented as follows: 
%%$\nabla_\Phi I(X;Y)$ of the Gaussian channel can be represented as
% \begin{equation}
%  \nabla_\Phi I(X;Y)=\mathbb{E}\left[ D_F(X,\mathbb{E}[X|Y]) \right], 
% \end{equation}
% where $D_F$ is a generalized Bregman divergence associated with the function
%\begin{equation}
%F(X)=\Phi XX^T.
%\end{equation}
%\end{theorem}

\subsection{Algorithmic Advantages}

%It is now important to reflect on possible advantages of re-expressing the results

Theorem \ref{thm:poisbregman} and \ref{thm:gaussianbregman} suggest a deep connection between the gradient and the generalized Bregman divergence. Besides, if a gradient is given in terms of a generalized Bregman divergence, it is possible to simplify optimization algorithms based on gradient-descent. Rather than calculating the gradient itself, one may work directly on its dual form provided that it is easier to calculate the dual function. This idea is behind the essence of the mirror descent methods which have been shown to be very computationally efficient \cite{ben2001ordered,nemirovsky1983problem}.

\section{Applications: Document Classification}
\label{sec:application}

%As we mentioned in the introduction, the vector Poisson channel model has numerous applications in various fields. In this section, we provide general descriptions on how to apply Theorem \ref{thm:poisson} to document classification and X-ray systems. 

The practical relevance of the vector Poisson channel model relates to its numerous applications in various domains. We now briefly shed some light on how our results link to one emerging application that involves classification of documents

Let the random vector $X\in\mathbb{R}_+^n$  model the Poisson rates of $n$ count measurements, e.g. the Poisson rates of the counts of words in a documents for a vocabulary/dictionary of $n$ words. %($e.g.$, the count of words in a document, for a vocabulary/dictionary of $n$ words)

It turns out that -- in view of its compressive nature -- it may be preferable to use the model $Y \sim \mbox{Pois}(\Phi X)$, where $\Phi\in\{0,1\}^{m\times n}$ with $m\ll n$, rather than the conventional model $Y \sim \mbox{Pois}(X)$ \cite{zhou2011beta}, as the basis for document classification. In particular, each row of $\Phi$ defines a \emph{set} of words (those with row elements equal to one) that characterize a certain topic. The corresponding count relates to the number of times words in that set are manifested in a document.

The problem then relates to the determination of the ``most informative'' set of topics, i.e. the matrix $\Phi$. The availability of the gradient of mutual information with respect to the scaling matrix, which has been unveiled in this work, then offers a means to tackle this problem via gradient descent methods.

\section{Conclusion}
\label{sec:conclusion}

The focus has been on the generalization of connections between information-theoretic and estimation-theoretic quantities from the scalar to the vector Poisson channel model. In particular, in doing so, we have revealed that the connection between the gradient of mutual information with respect to key system parameters and conditional mean estimation is an overarching theme that transverses not only the scalar but also the vector counterparts of the Gaussian and Poisson channel.

By constructing a generalized version of the classical Bregman divergence, we have also established further intimate links between the gradient of mutual information in vector Poisson channel models and the gradient of mutual information in vector Gaussian channels. This generalized notion, which aims to extend the range of the conventional Bregman divergence from scalar to vector domains, has been shown to exhibit various properties akin to the properties of the classical notion, including non-negativity, linearity, convexity and duality.

By revealing the gradient of mutual information with respect to key system parameters of the vector Poisson model, including the scaling matrix and the dark current, it will be possible to use gradient-descent methods to address several problems, including generalizations of compressive-sensing projection designs from the Gaussian \cite{carson2012communications} to the Poisson model, that are known to be relevant in emerging applications (e.g. in X-ray and document classification).

\bibliographystyle{IEEEbib}
\bibliography{refs}

\end{document}